\newtheorem{thm}{Theorem}
\newtheorem{prop}{Proposition}
\newtheorem{problem}{Problem}
\newtheorem{defin}{Definition}
\newtheorem{ex}{Example}
\newcommand{\gd}[1]{{\mathcal{CN}}\left( #1 \right)}
\newcommand{\cov}[1]{{\mathbf{cov}}\left( #1 \right)}
\newcommand{\vect}[1]{{\mathbf{vec}}\left( #1 \right)}
\newcommand{\Ub}{\bar{U}}
\newcommand{\Ut}{\tilde{U}}
\newcommand{\Sigmab}{\bar{\Sigma}}
\newcommand{\Sigmat}{\tilde{\Sigma}}
\newcommand{\sigmat}{\tilde{\sigma}}
\newcommand{\Hh}{\hat{H}}
\newcommand{\Ht}{\tilde{H}}
\newcommand{\Ch}{\hat{C}}
\newcommand{\Ct}{\tilde{C}}
\newcommand{\C}{\mathcal{C}}
\newcommand{\E}[1]{{\mathbf{E}}\left\{ #1 \right\}}
\newcommand{\tr}[1]{{\mathbf{Tr}}\left( #1 \right)}
\newcommand{\R}{\mathbb{R}}
\def\QED{~\rule[-1pt]{6pt}{6pt}\par\medskip}
\newenvironment{proof}{{\bf Proof.\ }}{ \hfill \QED}
\title{Optimal Data and Training Symbol Ratio for Communication over Uncertain Channels}
\date{}
\author{
\IEEEauthorblockN{Ather Gattami}
\IEEEauthorblockA{Ericsson Research\\ Stockholm, Sweden\\
Email: ather.gattami@ericsson.com}
}
\begin{document}

\maketitle

\begin{abstract}
We consider the problem of determining the power ratio between the training symbols and data symbols in order to maximize the channel capacity for transmission over uncertain channels with a channel estimate available at both the transmitter and receiver.
The receiver makes an estimate of the channel by using a known sequence of training symbols. This channel estimate is then transmitted back to the transmitter. The capacity that the transceiver maximizes is the worst case capacity, in the sense that given a noise covariance, the transceiver maximizes the minimal capacity over all distributions of the measurement noise under a fixed covariance matrix known at both the transmitter and receiver. We give an exact expression of the channel capacity as a function of the channel covariance matrix, and the number of training symbols used during a coherence time interval. This expression determines the number of training symbols that need to be used by finding the optimal integer number of training symbols that maximize the channel capacity. As a bi-product, we show that linear filters are optimal at both the transmitter and receiver.
\end{abstract}


\section*{Notation}

\begin{tabular}{ll}
$\det(A)$ 		& $\det(A) = \prod_{i} \lambda_i$, where  $\{\lambda_i\}$\\
			& are the eigenvalues of the square matrix $A$.\\
$\otimes$	& $A\otimes B$ denotes the kronecker product between\\
			&  the matrices $A$ and $B$.\\
$\mathbf{E}\{\cdot\}$ 	& $\mathbf{E} \{x\}$ denotes the expected value of the\\ 
			& stochastic variable $x$.\\
$\mathbf{E}\{\cdot|\cdot\}$ 	& $\mathbf{E} \{x|y\}$ denotes the expected value of the\\
			& stochastic variable $x$ given $y$.\\
$\mathbf{cov}$ & $\mathbf{cov}\{x,y\} = \E{xy^* }$.\\
$h(x)$ 		& Denotes the entropy of $x$.\\
$h(x|y)$ 		& Denotes the entropy of $x$ given $y$.\\
$\textup{I}(x;y)$ 		& Denotes the mutual information between\\
			& $x$ and $y$.\\

$\mathcal{N}(m,V)$  & Denotes the set of Gaussian variables with\\ 
				& mean $m$ and covariance $V$.\\
\end{tabular}


\section{Introduction}

\subsection{Background}

This work considers the problem of determining the power ratio between the training symbols and data symbols in order to maximize the channel capacity for transmission over uncertain channels with channel state information at the transmitter and receiver. While the problem of MIMO communication over a channel known at the transmitter and receiver is well understood, the problem of uncertain channels still needs a deeper understanding of how much power we should spend on estimating the channel in order to transmit as much data as possible.
This problem poses a trade-off between the power ratio allocated for training and data transmission. On one hand, if we spend more power on training and less on data transmission, the data throughput will be small of course. On the other hand, if we spend most of the power on data transmission and much less on training, the channel estimate will be bad and therefore, the channel estimation error noise will be large, causing a rather low data rate. Given \textit{per symbol} power constraints, one might need  a number of training symbols in order to achieve a certain quality of channel state information, which leaves a smaller number of symbols for data transmission. This is an important constraint that we take into account in this work.

\subsection{Previous Work}
There has been a lot of work on MIMO communication in the context of uncertain channel state 
information. The seminal paper of Telatar \cite{telatar:1999} studied the problem of communication over uncertain Gaussian channels under the assumption that the channel realization is available at the receiver but not the transmitter. However, in practice, the transmitter and receiver don't have full knowledge of the channel. The work was extended in \cite{marzetta:1999} for the case of slowly varying channels. The channel estimation extension of \cite{telatar:1999} was presented in \cite{hassibi:2003}, where the problem of power ratio between the training and data symbols was studied under \textit{average} and \textit{per symbol} power constraints over the channel coherence time(the time where the channel is roughly constant). The crucial assumption of average power constraint allows for spending only one symbol on training, since the power is only limited by the total power resource available during the coherence time. For the case of \textit{per symbol} power constraints, the power ratio is harder to compute. The case where a channel estimate is available at the transmitter and reciever was studied in \cite{caire:2010} under the specific signaling strategy of zero-forcing linear beamforming and \textit{given Gaussian measurement noise}. Calculation of the channel capacity with respect to 
a channel estimate available at the receiver only was given in \cite{fodor:2014}.

\subsection{Contribution}
We consider communication over uncertain channels with channel state feedback at the transmitter. The receiver makes an estimate of the channel by using a known sequence of training symbols. This channel estimate is then \textit{transmitted back} to the transmitter. The capacity that the transceiver maximizes is the worst case capacity, in the sense that given a noise covariance, the transceiver maximizes the minimal capacity over all distributions of the measurement noise under a fixed covariance matrix known at both the transmitter and receiver. The channel estimation error implies that the covariance of the transmitted symbols over time affects both the covariance of the transmitted information symbols and the total noise covariance, which makes the signal structure more complicated, where it's not clear if the symbols should be uncorrelated over time. 
For the single input multiple output (SIMO) channel, we give an exact expression of the channel capacity as a function of the channel covariance matrix, the noise covariance matrix, and the number of training symbols used during a coherence time interval. This expression determines the number of training symbols that need to be used by finding the optimal integer number of training symbols that maximize the channel capacity. Numerical examples illustrate the trade-off between the number of training and data symbols. The results indicate that when the transmission power (or equivalently the signal to noise ratio) is high, a smaller number of training symbols is required to maximize the capacity compared to the low transmission power case.  We confirm these observations theoretically considering the asymptotic behavior of the power as it grows large or decreases to very small values.

\section{Preliminaries}

\begin{defin}[Kronecker Product]
For two matrices $A\in \R^{m\times n}$ and $B\in \R^{p\times q}$, the Kronecker product is defined as
$$
A\otimes B = 
		\begin{pmatrix}
			a_{11} B 		& a_{12} B 			& \cdots 	& a_{1n} B\\
			a_{21} B		& a_{22} B 			& \cdots 	& a_{2n} B\\
			\vdots 		& \vdots 				& \ddots  	& \vdots    \\
			a_{m1} B		& a_{m2} B 			& \cdots	& a_{mn} B
		\end{pmatrix}
$$
\end{defin}

\begin{prop}[Mutliplication Property of the Kronecker Product]
\label{KP1}
For any set of matrices $A, B, C, D$, we have that
$$
(A\otimes B)(C\otimes D)=AC\otimes BD
$$
\end{prop}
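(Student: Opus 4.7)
The plan is to prove the identity by direct block-matrix multiplication, exploiting the fact that $A\otimes B$ and $C\otimes D$ are naturally partitioned into blocks indexed by the entries of $A$ and $C$ respectively.

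First I would fix dimensions so that the products actually make sense: take $A\in\R^{m\times n}$, $B\in\R^{p\times q}$, $C\in\R^{n\times r}$, and $D\in\R^{q\times s}$. Then $A\otimes B\in\R^{mp\times nq}$ and $C\otimes D\in\R^{nq\times rs}$, so $(A\otimes B)(C\otimes D)\in\R^{mp\times rs}$, which matches the size $mp\times rs$ of $AC\otimes BD$. This dimension check confirms that both sides are at least comparable.

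Next I would view $A\otimes B$ as an $m\times n$ array of $p\times q$ blocks whose $(i,j)$ block is $a_{ij}B$, and $C\otimes D$ as an $n\times r$ array of $q\times s$ blocks whose $(j,k)$ block is $c_{jk}D$. Block multiplication then gives that the $(i,k)$ block of $(A\otimes B)(C\otimes D)$ is
\begin{equation*}
\sum_{j=1}^{n}(a_{ij}B)(c_{jk}D)=\Bigl(\sum_{j=1}^{n}a_{ij}c_{jk}\Bigr)BD=(AC)_{ik}\,BD.
\end{equation*}
By the definition of the Kronecker product, $(AC)_{ik}\,BD$ is precisely the $(i,k)$ block of $AC\otimes BD$, so the two matrices agree block by block and hence are equal.

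The main obstacle is really only notational: one must be careful to keep the block indices $(i,k)$ distinct from the entry indices within a block, and to justify that scalars $a_{ij}$ and $c_{jk}$ can be pulled out of the matrix product $B\cdot D$ (which is immediate since they are scalars). No deeper machinery is needed; the result is a straightforward consequence of how block multiplication interacts with scalar factors.
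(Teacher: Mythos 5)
Your proof is correct. The paper itself does not prove this proposition --- it simply writes ``Consult Horn and Johnson'' --- so there is no in-paper argument to compare against; what you have written is the standard self-contained proof that such a reference would contain. Your block-multiplication computation is exactly right: the $(i,k)$ block of $(A\otimes B)(C\otimes D)$ is $\sum_{j}(a_{ij}B)(c_{jk}D)=(AC)_{ik}\,BD$, which is the $(i,k)$ block of $AC\otimes BD$. You also correctly repair a small imprecision in the statement as given: the identity only makes sense when the inner dimensions of $A,C$ and of $B,D$ are compatible, a hypothesis the paper's phrase ``any set of matrices'' omits. No gaps.
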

\begin{proof}
Consult \cite{horn:johnson2}.
\end{proof}

\begin{prop}[Determinant Property]
\label{detab}
For any two matrices $A\in \R^{m\times n}$ and $B\in \R^{n\times m}$, we have that
$$
\det(AB+I_m) = \det(BA+I_n)
$$
\end{prop}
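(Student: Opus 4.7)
The plan is to use the classical block-matrix / Schur complement trick, which proves Sylvester's determinant identity in a single line once the setup is in place. I would introduce the $(m+n)\times(m+n)$ block matrix
$$
M = \begin{pmatrix} I_m & -A \\ B & I_n \end{pmatrix},
$$
and then compute $\det(M)$ in two different ways by performing block LU factorizations pivoting on each of the two diagonal blocks.

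First, pivoting on the upper-left block $I_m$ (which is trivially invertible), I would eliminate the block $B$ below it by left-multiplying by $\begin{pmatrix} I_m & 0 \\ -B & I_n \end{pmatrix}$, yielding a block upper-triangular matrix with diagonal blocks $I_m$ and $I_n - B(I_m)^{-1}(-A) = I_n + BA$; the determinant of the transforming matrix is $1$, so $\det(M) = \det(BA + I_n)$. Second, pivoting on the lower-right block $I_n$, I would eliminate $-A$ by right-multiplying by $\begin{pmatrix} I_m & 0 \\ -B & I_n \end{pmatrix}^{-1}$ or directly via the Schur complement formula with respect to the lower-right block, yielding diagonal blocks $I_m - (-A)(I_n)^{-1} B = I_m + AB$ and $I_n$, so $\det(M) = \det(AB + I_m)$. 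Equating the two expressions for $\det(M)$ gives the claim.

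There is no real obstacle here; the only things to watch are signs and the ordering of factors in the Schur complement, since one must keep track of whether the off-diagonal block is $A$ or $-A$ so that the two Schur complements come out as $I_n + BA$ and $I_m + AB$ (not with minus signs). A reasonable alternative, if one prefers to avoid block matrices, is to note that $AB$ and $BA$ have identical nonzero eigenvalues with identical algebraic multiplicities (a standard fact proved by a small perturbation or rank argument), and that $\det(AB+I_m)$ and $\det(BA+I_n)$ are, respectively, $\prod_i (1+\lambda_i)$ over the eigenvalues of $AB$ and of $BA$, with the extra eigenvalues beyond the common nonzero ones all being zero and contributing a factor of $1$. I would, however, present the block-matrix proof since it is both shorter and self-contained.
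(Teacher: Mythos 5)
Your proof is correct. Note, however, that the paper does not actually prove this proposition: its ``proof'' is a one-line pointer to Horn and Johnson, so there is no argument in the paper to compare against step by step. What you supply is the standard self-contained derivation of Sylvester's determinant identity via the block matrix
$$
M = \begin{pmatrix} I_m & -A \\ B & I_n \end{pmatrix},
$$
and both of your Schur-complement computations check out: eliminating $B$ below the $I_m$ pivot gives the block upper-triangular form with blocks $I_m$ and $I_n + BA$, while the Schur complement with respect to the $I_n$ pivot is $I_m - (-A)I_n^{-1}B = I_m + AB$, so $\det(BA+I_n) = \det(M) = \det(AB+I_m)$. Your sign bookkeeping (placing $-A$ in the off-diagonal block so that both complements come out with plus signs) is exactly right, and your fallback via the coincidence of nonzero eigenvalues of $AB$ and $BA$ is also a valid, if less elementary, route. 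The only difference from the paper is that you prove the statement rather than cite it, which is a strict improvement in self-containedness at the cost of a few lines.
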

\begin{proof}
Consult \cite{horn:johnson1}.
\end{proof}

\begin{prop}[AM-GM Inequality]
\label{amgm}
Let $a_{1}, a_2, ..., a_{n}$ be $n$ nonnegative real numbers. Then
$$
\frac{1}{n}\sum_{i=1}^n a_i \geq \sqrt[n]{\prod_{i=1}^n a_i}
$$ 
with equality if and only if $a_{1}= a_2 = \cdots = a_{n}$.
\end{prop}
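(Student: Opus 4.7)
The plan is to prove the inequality by Cauchy's classical forward-backward induction on $n$. First, I would establish the base case $n=2$: the inequality $\frac{a_1+a_2}{2} \geq \sqrt{a_1 a_2}$ is equivalent to $(\sqrt{a_1}-\sqrt{a_2})^2 \geq 0$ after rearranging, which holds for all nonnegative reals, with equality iff $a_1=a_2$. This handles the seed of both inductions.

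Next, I would carry out a forward doubling step: assuming the inequality holds for $n$ terms, I can derive it for $2n$ terms by splitting the $2n$ numbers into two groups of size $n$, applying the inductive hypothesis to each group to bound its arithmetic mean by its geometric mean, and then applying the $n=2$ case to the two resulting means. Iterating shows the inequality for every power of $2$. Then a backward descent step: assuming the inequality holds for $n+1$ terms, I would deduce it for $n$ by setting $a_{n+1} := \frac{1}{n}\sum_{i=1}^n a_i$, so that the $(n+1)$-fold arithmetic mean coincides with $a_{n+1}$ itself. Plugging this into the $(n+1)$-term inequality, raising both sides to the $(n+1)$-th power, and dividing by $a_{n+1}$ collapses it to the desired $n$-term statement. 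Combining doubling with descent covers all positive integers $n$.

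The equality characterization is inherited through both directions: in the forward step, the $n=2$ base and the inductive equality condition together force all $2n$ values to coincide; in the backward step, equality for $n$ values forces equality in the augmented $(n+1)$-term version, which by the inductive characterization means $a_1=\cdots=a_{n+1}$, and in particular $a_1=\cdots=a_n$. The main obstacle is precisely this threading of the equality characterization through the backward step, since $a_{n+1}$ is defined in terms of the other $a_i$; one has to verify carefully that ``all $n+1$ entries equal'' correctly reduces to ``all original $n$ entries equal'' without introducing spurious degenerate cases. Everything else amounts to routine algebra leveraging the $n=2$ case.
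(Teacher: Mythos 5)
Your proposal is correct. Note, however, that the paper does not actually prove this proposition: its ``proof'' consists of the single sentence deferring to standard Calculus textbooks, so there is no argument in the paper to compare yours against. Your Cauchy forward--backward induction is one of the standard textbook routes and all the steps you outline go through: the $n=2$ seed via $(\sqrt{a_1}-\sqrt{a_2})^2\geq 0$, the doubling step combining two $n$-term applications with the two-term case, and the descent step with $a_{n+1}$ set equal to the arithmetic mean $A$ of the first $n$ terms. The only detail worth making explicit in a full write-up is the degenerate case $A=0$ in the descent step (where you divide by $a_{n+1}=A$): since the $a_i$ are nonnegative, $A=0$ forces all $a_i=0$ and the claim holds trivially, so no generality is lost. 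Your handling of the equality characterization through both induction directions is also sound.
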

\begin{proof}
The proof may be found in most standard  textbooks on Calculus.
\end{proof}



\section{Problem Formulation}
Let $H$ be a random channel such that $\vect{H} \sim \gd{0, C}$. The realization of the channel is assumed to be constant over
a coherence time interval corresponding to a block of $T$ symbols. Let $x(t)\sim \gd{0, X}$ be the transmitted symbol over the Gaussian channel at time $t$. The received signal at time $t$ is given by
$$
y(t) = H x(t) + w(t)
$$
where $\{w(t)\}$ is random white noise process, independent of $H$ and $x$, with zero mean and covariance given by
$\E{w(t) w^* (t)} = I_m$ known at both the transmitter and receiver. Without loss of generality, we assume that $W$ is invertible.

Let $(x(1), ..., x(T))$ be the block symbols transmitted within the channel coherence time $T$.
The \textit{average} power constraint imposed on the \textit{block} is given by
$$
\frac{1}{T}\sum_{t=1}^T \E{|x(t)|^2} \leq P
$$ 
The power constraint above could allow for some symbols $x(t)$ to have a larger power than $P$. However, in the real world, we have hard constraints on the peak average power per symbol, so we will impose power constraints per symbol given by
$$
\E{|x(t)|^2} \leq P, ~~~~ \text{for }~ t = 1, ..., T
$$ 
Suppose that we allocate $T_\tau$ training symbols for the channel estimation part and $T_d = T-T_\tau$ symbols  for data transmission. If a performance criterion is measured over the total channel coherence time $T$, then clearly the optimal strategy is to transmit the training symbol sequence first followed by the data symbols. Also, when no energy constraints are present, the optimal power allocation is for each symbol to be transmitted with full power $P$. However, it's not clear what the optimal \textit{ratio} between the training and data symbols. That is, what is the optimal choice of $T_\tau$ to maximize the channel capacity? To this end, we will derive the exact expression of the channel capacity SIMO channel. 
 
\subsection{SIMO Channel Estimation}
Consider a random Gaussian channel $H$ taking values in $\R^m$.
Let $x_\tau$ be a deterministic training symbol known at the transmitter and receiver with $|x_\tau|^2 = P$. 
The transmitted training sequence is given by
$$
x(t) = x_\tau, ~~~~ \text{for }~ t = 1, ..., T_\tau
$$
At the receiver, we obtain the measurement symbols
$$
y(t) = H x_\tau + w(t), ~~~~ \text{for }~ t = 1, ..., T_\tau
$$ 
Introduce the vectors
$$
y_\tau = \begin{pmatrix}
			y(1)\\
			y(2)\\
			\vdots \\
			y(T_\tau)
		\end{pmatrix}, ~~~~
w_\tau = \begin{pmatrix}
			w(1)\\
			w(2)\\
			\vdots \\
			w(T_\tau)
		\end{pmatrix}		
$$
and let the covariance matrix of $w_\tau$ be 
$$
\E{w_\tau w_\tau^*} = I_m\otimes I_{T_\tau} = I_{mT_\tau}
%
$$
It's well known that the optimal estimator $\Hh$ of $H$ given $y_\tau$ that minimizes the MSE is given
by   

\begin{align}
\Hh &= \E{H ~|~ y_\tau} \label{Hhat} \\	
	&= \cov{H, y_\tau}\cdot \cov{y_\tau, y_\tau}^{-1}y_\tau \\
	&= \E{H y^*_\tau} \cdot \E{y_\tau y_\tau^*}^{-1} y_\tau \\
	&= x^*_\tau C \left(P C + \frac{1}{T_\tau} I_m \right)^{-1}\frac{1}{T_\tau}\sum_{t=1}^{T_\tau} y(t) \\
	&= x^*_\tau C \left(P C + \frac{1}{T_\tau} I_m \right)^{-1} (Hx_\tau + \bar{w}(t))
\end{align}
where 

$$
\bar{w}(t) = \frac{1}{T_\tau}\sum_{t=1}^{T_\tau} w(t)
$$
The covariance of $\bar{w}(t)$ is easily obtained from the expression above, and it's given by 
$$
\E{\bar{w}(t)\bar{w}^*(t)} = \frac{1}{T_\tau} I_m
$$
The channel estimation error is given by
\begin{align}
\Ht 	&= H - \Hh \\		
	&= H -  x^*_\tau C \left(P C + \frac{1}{T_\tau} I_m \right)^{-1} (Hx_\tau + \bar{w}(t)) \\
	&= (I_m - P C \left(P C + \frac{1}{T_\tau} I_m \right)^{-1})H \nonumber\\
	& ~~~	- x^*_\tau C \left(P C + \frac{1}{T_\tau} I_m \right)^{-1} \bar{w}(t) \\
	&= \frac{1}{T_\tau}\left(P C + \frac{1}{T_\tau} I_m \right)^{-1} H \nonumber \\
	& ~~~ - x^*_\tau C \left(P C + \frac{1}{T_\tau} I_m \right)^{-1} \bar{w}(t)
\end{align}
Now we have that

\begin{align}
\Ch &= \E{\Ch} \label{Chat} \\
	&= P C \left(P C + \frac{1}{T_\tau} I_m \right)^{-1}C 
\end{align}
It's well known that $\Hh$ and $\Ht$ are independent since $H$ and $y_\tau$ are jointly Gaussian. Thus,

\begin{align}
\Ct 	&= \E{\Ht \Ht^*}\\
	&= \E{(H - \Hh)(H - \Hh)^*}\\
	&= C - \Ch\\
	&= C -  P C \left(P C + \frac{1}{T_\tau} I_m \right)^{-1}C\\
	&= (I_m -  P C \left(P C + \frac{1}{T_\tau} I_m \right)^{-1})C\\
	&= \frac{1}{T_\tau}\left(P C + \frac{1}{T_\tau} I_m \right)^{-1}C \label{Ctilde}
\end{align}

\subsection{Channel Capacity}
In this section, we will derive a formula for the channel capacity under the assumption that the transmitted and receiver have a common estimate of the channel. The estimate is the optimal estimate that minimizes the expected value of the variance of the estimation error.

Suppose that we transmit $T_\tau$ training symbols during the time interval $t=1, ..., T_\tau$ and $T_d = T-T_\tau$ data symbols during the time interval $t=T_\tau+1, ..., T$.  The received noisy measurements of the data symbols are given by
\begin{align}
y(t) &= H x_d(t) + w(t)\\
	&= \Hh x_d(t) + \Ht x_d(t) + w(t)\\
	&= \Hh x_d(t) + v(t), 
\end{align}
for $ t=T_\tau+1, ..., T$. Note that $v(t) = \Ht x_d(t) + w(t)$ is uncorrelated with $x_d(t)$ and $\Hh x_d(t)$ jointly. 
Introduce the vectors
$$
y_d = \begin{pmatrix}
			y(T_\tau+1)\\
			y(T_\tau+2)\\
			\vdots \\
			y(T)
		\end{pmatrix}, ~~~~
x_d = \begin{pmatrix}
			x_d(T_\tau+1)\\
			x_d(T_\tau+2)\\
			\vdots \\
			x_d(T)
		\end{pmatrix}		
$$
$$
 w_d = \begin{pmatrix}
			w(T_\tau+1)\\
			w(T_\tau+2)\\
			\vdots \\
			w(T)
		\end{pmatrix}, ~~~~
v_d = \begin{pmatrix}
			v(T_\tau+1)\\
			v(T_\tau+2)\\
			\vdots \\
			v(T)
		\end{pmatrix}				
$$
and let the resepective covariance matrices of $ w_d$ and $v_d$ be 
$$W_d=\E{ w_d w_d^*} = I_m \otimes I_{T_d} = I_{mT_d}$$
and
$$V_d = \E{v_d v_d^*} = V\otimes I_{T_d} =
		\begin{pmatrix}
			V 		& 0 			& \cdots 	& 0\\
			0		& V 		& \cdots 	& 0\\
			\vdots 	& \vdots 	& \ddots      	& \vdots \\
			0		& 0 			& \cdots		& V
		\end{pmatrix}
$$
The \textit{worst case} capacity of the channel with repsect to the measurement noise $ I_{mT_d}$ is given by 
\begin{equation}
\label{wcost}
	\C(T_{\tau}) =  \sup_{\substack{ x_d \\ \E{|x_d(t)|^2}= P}}  \inf_{\substack{ v_d\\ \E{ v_d  v_d^*}= I_{mT_d}}} \textup{I}(x_d; y_d)
\end{equation}
Thus, the problem that we want to solve is as follows.

\begin{problem}

Find the optimal integer $T_\tau \in [1, T]$ such that

\begin{equation*}
	 \C(T_{\tau}) = \sup_{\substack{ x_d \\ \E{|x_d(t)|^2}= P}} \inf_{\substack{ v_d\\ \E{ v_d v_d^*}= I_{mT_d}}}  
	 \textup{I}(x_d; y_d)
\end{equation*}
is maximized.

\end{problem}    

\section{Main Results}

It's well known that for a deterministic channel $\Hh_d$ and signal measurement 
$$
y_d = \Hh_d x_d + v_d
$$
with measurement noise $v_d$ uncorrelated with the transmitted signal $x_d$, the optimal transmitting strategy is for $x_d$
 to be Gaussian in order to minimize the worst case noise $v_d$ which is also shown to be Gaussian. In other words, the Gaussian input and noise form a Nash equilibrium. The case when the noise $v_d$ has an \textit{arbitrary} covariance matrix 
 $V_d$ has been solved in \cite{hassibi:2003}. In our case, the situation is different. The covariance matrix of $v_d$ has a \textit{structure} that also depends on the choice of the covariance matrix of the transmitted signal $x_d$. More precisely, recall that we have assumed that $\{w(t)\}$ is a temporally uncorrelated noise process with arbitrary distribution. Introduce
$$\Hh_d = \Hh \otimes I_{T_d} =
		\begin{pmatrix}
			\Hh 	& 0 			& \cdots 	& 0\\
			0		& \Hh 		& \cdots 	& 0\\
			\vdots 	& \vdots 	& \ddots      	& \vdots \\
			0		& 0 			& \cdots		& \Hh
		\end{pmatrix}
$$
and
$$\Ht_d = \Ht \otimes I_{T_d} =
		\begin{pmatrix}
			\Ht 		& 0 			& \cdots 	& 0\\
			0		& \Ht 		& \cdots 	& 0\\
			\vdots 	& \vdots 	& \ddots      	& \vdots \\
			0		& 0 			& \cdots		& \Ht
		\end{pmatrix}
$$
We may write $ w_d = v_d - \Ht_d x_d$, with $v_d$ uncorrelated with $\Hh_d$ and $x_d$. Furthermore, 
$$
\E{v_d v_d^*} - \E{\Ht_d x_d x_d^*\Ht_d^*}  = \E{ w_d w_d^*} = I_{mT_d}
$$
Let $X_d = \E{x_d x_d^*}$. Then we have
$$
\Hh_d x_d = 		\begin{pmatrix} \Hh x_d(T_\tau +1) \\ \Hh x_d(T_\tau +2)\\ \vdots \\ 
\Hh x_d(T) \end{pmatrix} 
$$
and the blocks of $\Hh_d x_d x_d^* \Hh_d^*$ at position $(i,j)$ will be given by(since $x_d(T_\tau +i)$ is a scalar for all $i = 1, ..., T-T_\tau$)
\begin{equation*}
\begin{aligned}
\\
[ \Hh_d x_d x_d^* \Hh_d^* ]_{ij} 	&= \Hh x_d(T_\tau +i) x_d^*(T_\tau + j)\Hh^*\\
									&=x_d(T_\tau +i) x_d^*(T_\tau + j)\Hh\Hh^* ,
\end{aligned}
\end{equation*}
for $i, j = 1, ..., T - T_\tau$.
Thus, $\Hh_d x_d x_d^* \Hh_d^* = (x_d x_d^*)\otimes (\Hh \Hh^*)$ and given $\Hh$, we get
$$\E{\Ht_d x_d x_d^*\Ht_d^*} = X_d \otimes \Ch.$$
Similarly, we get $\Ht_d x_d x_d^* \Ht_d^* = (x_d x_d^*)\otimes (\Ht \Ht^*)$ and 
$$\E{\Ht_d x_d x_d^*\Ht_d^*} = X_d \otimes \Ct.$$ This gives in turn
$$
V_d = \E{v_d v_d^*} =  I_{mT_d} + X_d \otimes \Ct
$$
In particular, the block diagonal elements of $V_d$ are equal, with the block elements given by
$$
V = I_m + \E{\Ht P\Ht^*}= I_m + P\Ct
$$
Now the cost given by (\ref{wcost}) may be written in terms of $v_d$ instead. That is,
\begin{equation}
\label{vcost}
	\C(T_{\tau}) =  \sup_{\E{|x_d(t)|^2}= P} \inf_{\substack{v_d\\ \E{v_d v_d^*}= I_{mT_d} + X_d \otimes \Ct}} \textup{I}(x_d; y_d)
\end{equation}

\begin{thm}
\label{thm1}
Consider a communciation channel given by $y(t) = Hx(t) + w(t)$ with $H$ taking values in $\R^m$ 
and $H \sim \gd{0, C}$, $t = 1, ..., T-T_\tau$. Let $\Hh$ be the channel estimate that is available at both the transmitter and receiver, based on $T_\tau$ training symbols. 
Under the power constraint $\E{x^2(t)}\leq P$, 
the worst case capacity  $\C(T_{\tau})$ is given by
\[
\begin{aligned}
	\C(T_{\tau})  	&= (T-T_\tau) (\log_2{\det(PC + I_m)}- \log_2{\det(P\Ct + I_m)})
\end{aligned}
\]
with
$$
\Ct = \frac{1}{T_\tau}\left(P C + \frac{1}{T_\tau} I_m \right)^{-1}C
$$
\end{thm}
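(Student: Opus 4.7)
My plan proceeds in three stages: a Gaussian saddle-point reduction, a simplification using the Kronecker structure already assembled in the preamble to the theorem, and an optimization over the data-symbol covariance $X_d$. I expect the subtlest point to be the last step, passing from the $\Hh$-dependent log-det to the claimed formula written purely in terms of $C$ and $\Ct$.

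First, I would argue that the inner sup-inf in (\ref{vcost}) admits a Gaussian saddle point while treating $\Hh$ as known at both the transmitter and the receiver. By the standard differential-entropy argument, for any $v_d$ with prescribed covariance $V_d=I_{mT_d}+X_d\otimes\Ct$, the Gaussian distribution minimizes $h(v_d)$ and hence $\textup{I}(x_d;y_d)=h(y_d)-h(y_d\mid x_d)$; given Gaussian $v_d$, the Gaussian $x_d$ with covariance $X_d$ maximizes $h(y_d)$. The saddle value at each $X_d$ is therefore
\[
\log_2\det\bigl(I_{mT_d}+\Hh_d X_d \Hh_d^*\,V_d^{-1}\bigr).
\]

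Next, I would plug in the Kronecker identities $\Hh_d X_d \Hh_d^* = X_d\otimes(\Hh\Hh^*)$ and $V_d = I_{mT_d}+X_d\otimes\Ct$ already derived in the preamble, to rewrite the objective as
\[
\log_2\det\bigl(I_{mT_d}+X_d\otimes(\Hh\Hh^*+\Ct)\bigr)-\log_2\det\bigl(I_{mT_d}+X_d\otimes\Ct\bigr).
\]
Diagonalizing $X_d = UDU^*$ with $D=\mathrm{diag}(d_1,\dots,d_{T_d})$, each log-det factors as $\sum_{i=1}^{T_d}\log_2\det(I_m+d_iM)$ for $M=\Hh\Hh^*+\Ct$ or $M=\Ct$, so the objective is a separable function of the eigenvalues of $X_d$. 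The single-eigenvalue contribution $d\mapsto\log_2\det(I_m+d(\Hh\Hh^*+\Ct))-\log_2\det(I_m+d\Ct)$ is concave and strictly increasing in $d$ (transparent in the scalar case $m=1$; the general case follows by joint diagonalization together with Proposition \ref{amgm}). Under the per-symbol power constraint $[X_d]_{tt}\leq P$, which gives $\sum_i d_i=\mathrm{tr}(X_d)\leq PT_d$, concavity and monotonicity then force $d_1=\cdots=d_{T_d}=P$, i.e., $X_d=PI_{T_d}$. Under this choice the two log-dets collapse block-wise to $T_d\log_2\det\bigl(I_m+P(\Hh\Hh^*+\Ct)\bigr)$ and $T_d\log_2\det(I_m+P\Ct)$.

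The remaining and hardest step is to show that the first log-det evaluates to $T_d\log_2\det(I_m+PC)$. The identity $C=\Ch+\Ct$, derived before the theorem from the independence of $\Hh$ and $\Ht$ together with $\E{\Hh\Hh^*}=\Ch$, yields the second-moment identification $\E{\Hh\Hh^*}+\Ct=C$, but swapping expectation and $\log\det$ is not justified by Jensen's inequality in the required direction. The resolution must exploit the worst-case setup: either one argues that the noise adversary can effectively pool $\Hh\Hh^*$ to its mean $\Ch$ (since the common prior at both ends is only the covariance), or one substitutes the signal's mean outer product $\E{\Hh_d x_d x_d^*\Hh_d^*}=X_d\otimes\Ch$ for the random rank-one $X_d\otimes\Hh\Hh^*$ inside the objective. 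Either route upgrades $\Hh\Hh^*+\Ct$ to $C$ in the first log-det, giving $T_d\log_2\det(I_m+PC)$ and hence $\C(T_\tau)=(T-T_\tau)[\log_2\det(PC+I_m)-\log_2\det(P\Ct+I_m)]$.
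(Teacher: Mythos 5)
Your first two stages track the paper, but the final step is a genuine gap --- and you flag it yourself: neither of the two ``routes'' you sketch for upgrading $\Hh\Hh^*+\Ct$ to $C$ is an argument. Once you condition on $\Hh$ and write the saddle value as $\log_2\det\bigl(I_{mT_d}+\Hh_d X_d\Hh_d^* V_d^{-1}\bigr)$, you are committed to a quantity that depends on the realization of $\Hh$ (or, after averaging, to an expectation over $\Hh$ of a log-determinant), and there is no identity turning $\log_2\det\bigl(I_m+P(\Hh\Hh^*+\Ct)\bigr)$ into $\log_2\det(I_m+PC)$ for a given realization; only the second moment $\E{\Hh\Hh^*}=\Ch$ is available, and, as you note, Jensen points the wrong way. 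The paper never meets this obstacle because it never conditions on $\Hh$: it bounds the unconditional mutual information from below by $h(x_d)-h(\tilde x_d)$ via a linear receiver and from above by $h(y_d)-h(v_d)$, where $h(y_d)$ is controlled by the \emph{marginal} covariance $\E{y_dy_d^*}=X_d\otimes\Ch+I_{mT_d}+X_d\otimes\Ct=X_d\otimes C+I_{mT_d}$, the worst-case noise being the one that makes $y_d$ (not $v_d$ alone) Gaussian. The randomness of $\Hh$ thus enters only through $\Ch$, and $\Ch+\Ct=C$ delivers the stated formula directly. To repair your proof you would have to either switch to this unconditional formulation or prove that the adversary in (\ref{wcost}) drives the conditional value down to the unconditional one; the latter is precisely the content you are missing.

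Two smaller points. The claim that the Gaussian distribution \emph{minimizes} $h(v_d)$ at fixed covariance is backwards (it maximizes it); the Gaussian saddle point is correct but requires the two-sided argument the paper spells out. Your concavity/monotonicity route to $X_d=PI_{T_d}$ is fine and arguably cleaner than the paper's AM--GM step, though under the per-symbol constraint the diagonal of $X_d$ is pinned at $P$, so $\tr{X_d}=T_dP$ holds with equality rather than as an inequality.
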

Furthermore, the optimal receiver is given by the linear estimator $\E{x_d | \Hh x_d + v_d}$ 
with $v_d\sim \gd{0, P\Ct}$.

\begin{proof} 
The proof is deferred to the appendix.
\end{proof}

Theorem \ref{thm1} can be checked for the extreme cases of $T_{\tau} = 0$
and $T_{\tau} = T$. For the case $T_{\tau} = 0$, clearly no channel estimation is obtained and 
the expression of $\Ct$ reveals that this error becomes infinite. Thus, the capacity becomes $-\infty$ and no information may be transmitted. The other extreme, $T_{\tau} = T$, means that
all power is spent on channel estimation and no data may be transmitted. We see that the expression of the channel capacity formula gives zero capacity, agreeing with the physical model.

\section{Numerical Examples}

\begin{figure}  
\begin{center}  
 \includegraphics[height=10.5cm, trim=30mm 60mm 30mm 50mm, clip]{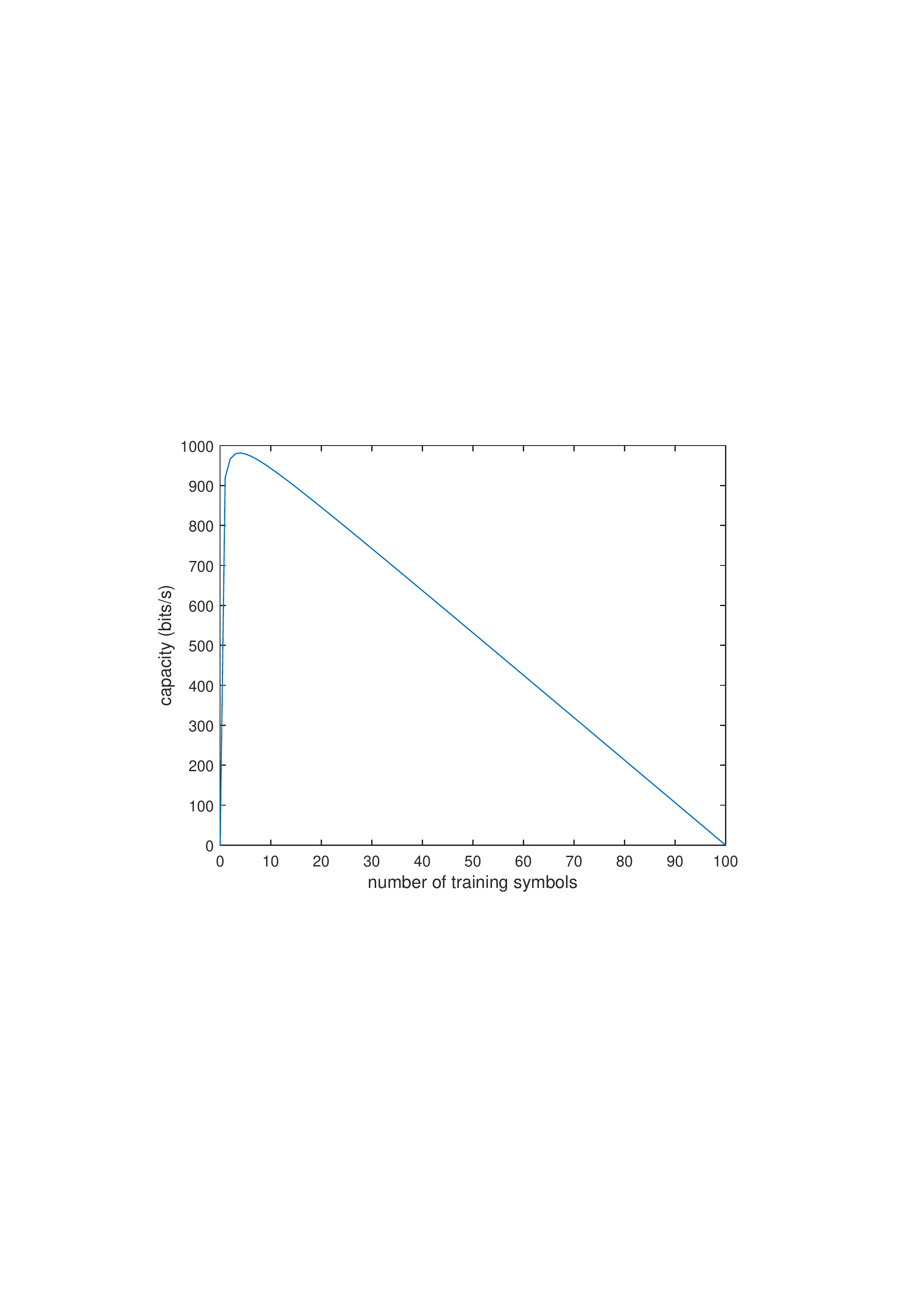}  
\caption{\small \sl A plot of the $2\times 1$ channel capacity as a function of the number of training symbols with tranmission power given by $P = 100$. \label{fig:highP}}  
\end{center}  
\end{figure}

\begin{figure}  
\begin{center}  
 \includegraphics[height=10.5cm, trim=30mm 60mm 30mm 50mm, clip]{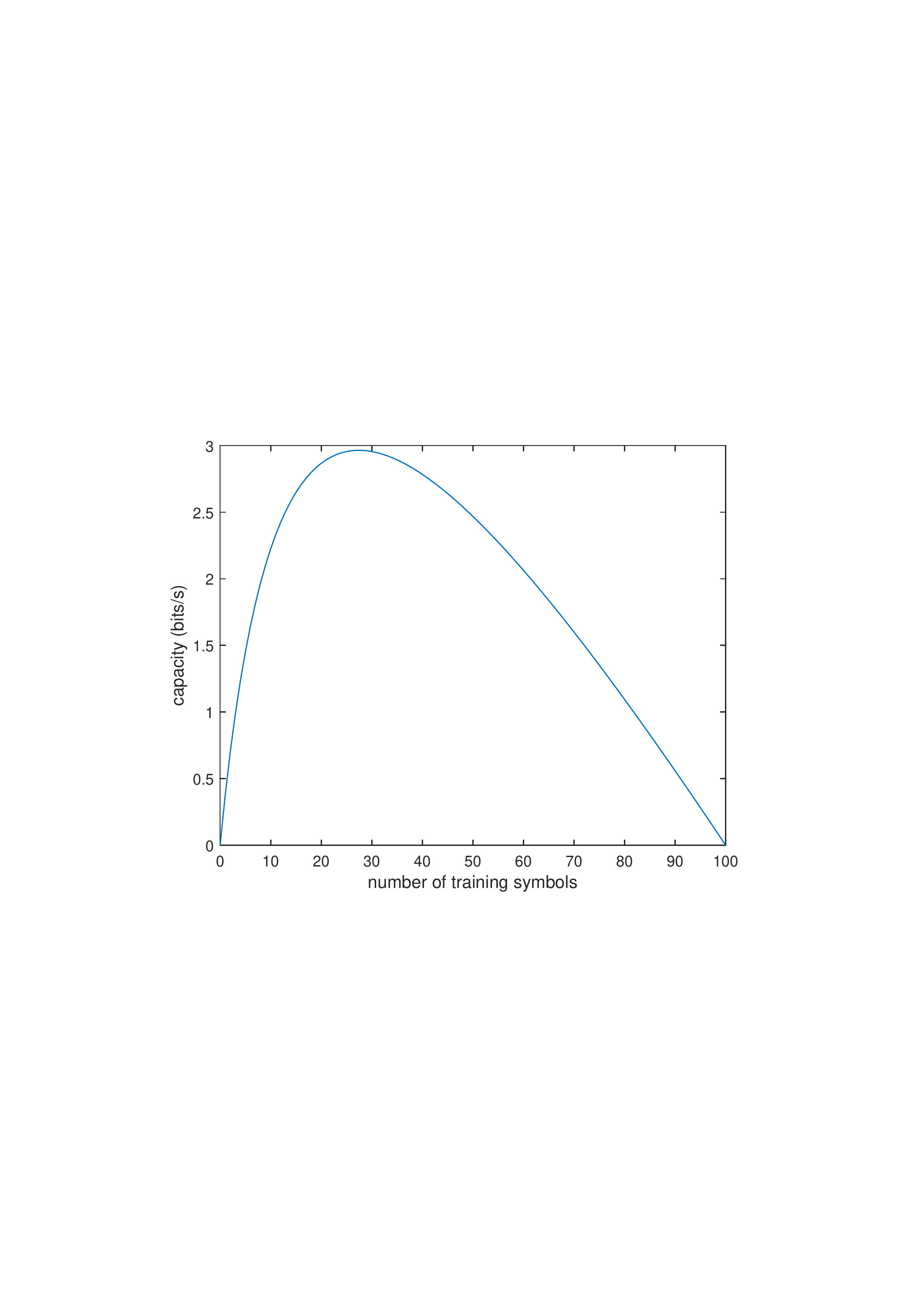}  
\caption{\small \sl A plot of the $2\times 1$ channel capacity as a function of the number of training symbols with tranmission power given by $P = 0.01$. \label{fig:lowP}}  
\end{center}  
\end{figure}  

\begin{ex}
\label{ex1}
Consider a communication SIMO channel $H$ with perfect feedback and a randomly generated covariance matrix
$$
C=
\begin{pmatrix}
	0.7426   & -0.7222\\
   -0.7222   & 6.4075
\end{pmatrix}
$$
The length of the block is assumed to be $T=100$. 
Figure \ref{fig:highP} shows that the optimum number of training symbols is $T_\tau=4$ when the power cosntraint is given by $P=100$, whereas for $P=0.01$, Figure \ref{fig:lowP}
shows that the capacity is maximized for $T_\tau=27$, which is 27\% of the available transmission power.
\end{ex}

\begin{ex}
\label{ex2}
Consider a communication SIMO channel $H$ with perfect feedback and a randomly generated covariance matrix $C$ given in the appendix.
The length of the block is assumed to be $T=100$. 
Figure \ref{fig:10highP} shows that the optimum number of training symbols is $T_\tau=2$ when the power cosntraint is given by $P=100$, whereas for $P=0.01$, Figure \ref{fig:10lowP}
shows that the capacity is maximized for $T_\tau=19$. The example clearly shows that the number of pilots needed is smaller when the number of receiving antennas increases. This is because of the increasement of the received signal power which makes the system less sensitive to channel estimation errors.
\end{ex}

\begin{figure}  
\begin{center}  
 \includegraphics[height=10.5cm, trim=30mm 60mm 30mm 50mm, clip]{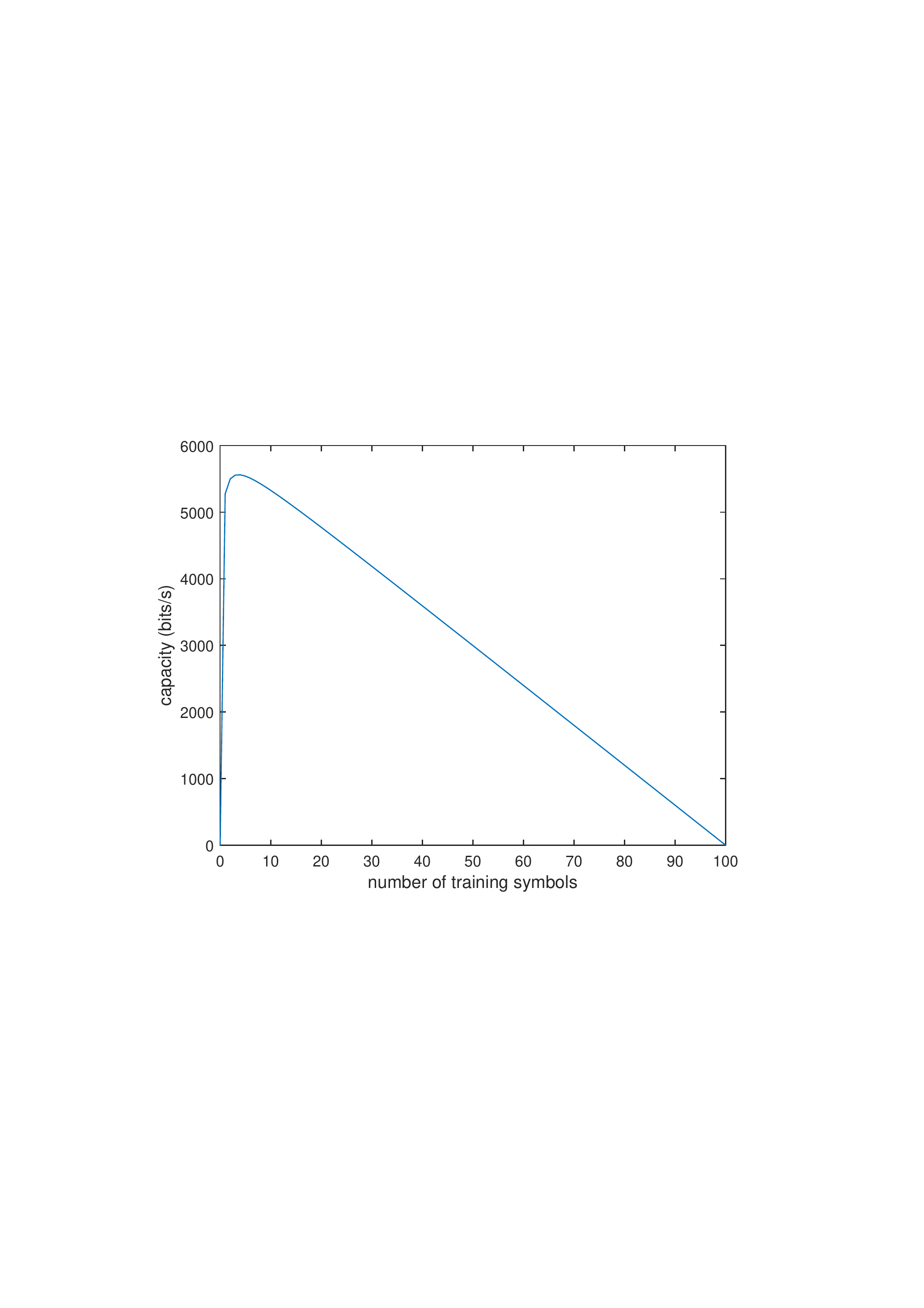}  
\caption{\small \sl A plot of the $10\times 1$ channel capacity as a function of the number of training symbols with tranmission power given by $P = 100$. \label{fig:10highP}}  
\end{center}  
\end{figure}

\begin{figure}  
\begin{center}  
\includegraphics[height=10.5cm, trim=30mm 60mm 30mm 50mm, clip]{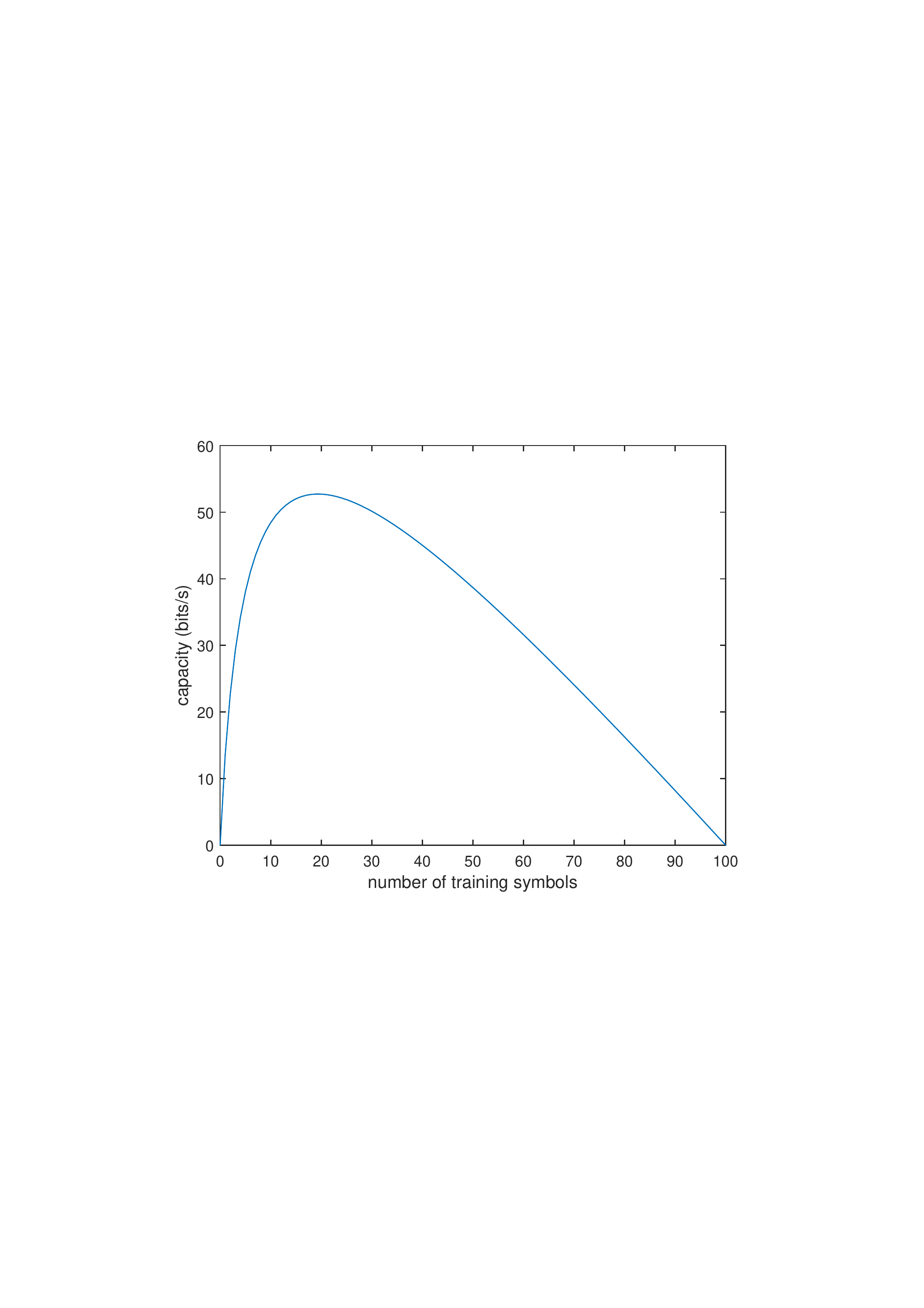}  
\caption{\small \sl A plot of the $10\times 1$ channel capacity as a function of the number of training symbols with tranmission power given by $P = 0.01$. \label{fig:10lowP}}  
\end{center}  
\end{figure}

\section{Asymptotic Results}

The previous numerical examples clearly show the the dependence of the number of training symbols on the
the symbol transmit power. We may understand this relation by looking at the asymptotic results when the 
power $P$ tends to $0$ or infinity (low respectively high signal to noise ratio). Recall that the covariance of the
channel estimation error is given by 
$$
\Ct = \frac{1}{T_\tau}\left(P C + \frac{1}{T_\tau} I_m \right)^{-1}C
$$
Cleary, if $P$ is very small and $T_\tau$ is small, then $ \frac{1}{T_\tau} I_m \succ PC$, and so 
$P C + \frac{1}{T_\tau} I_m$ will be dominated by $ \frac{1}{T_\tau} I_m$. Thus
$$
\Ct \rightarrow  \frac{1}{T_\tau}\left(\frac{1}{T_\tau} I_m \right)^{-1}C = C
$$
as $P\rightarrow 0$. This implies that the capacity would tend to zero too. Therefore, for small $P$, the number of training symbols $T_\tau$ 
must be large in order for $ \frac{1}{T_\tau} I_m$ to be of the same order as $PC$ and so making $\Ct$ in some sense.

Now consider the other extreme, that is when $P$ is large and recall the capacity formula 
\[
\begin{aligned}
	\C(T_{\tau})  	&= (T-T_\tau) (\log_2{\det(PC + I_m)}- \log_2{\det(P\Ct + I_m)})
\end{aligned}
\]
In this case, we get $ \frac{1}{T_\tau} I_m \prec PC$.
Inspecting the formula for $\Ct$ again, we see that 
$P C + \frac{1}{T_\tau} I_m$ will be dominated by $PC$ and thus
$$
\Ct \approx \frac{1}{T_\tau}\left(PC \right)^{-1}C = \frac{1}{T_\tau P}.
$$
Hence,
\[
\begin{aligned}
\log_2{\det(P\Ct + I_m)} &\approx \log_2 \det \left( \frac{1}{T_\tau}I_m + I_m \right) \\
							&=  m \log_2 \left( \frac{1}{T_\tau} + 1 \right)
\end{aligned}
\]
Note also that for large $P$, we have that
$$
\log_2 \det(PC+I_m) \approx  \log_2 \det (PC). 
$$
So for large $P$, the capacity may be well approximated by
\[
\begin{aligned}
	\C(T_{\tau})  	&\approx  (T-T_\tau) (\log_2{\det(PC)}- m \log_2 \left( \frac{1}{T_\tau} + 1 \right))
\end{aligned}
\]
We see that the capacity increases linearly with decreasing $T_\tau$ while it increases logarithmically with increasing
$T_\tau$. Therefore, the optimal choice of the number of training symbols $T_\tau\rightarrow 1$ as $P\rightarrow \infty$.

\section{Conclusion}
We considered the problem of deciding the power ratio between the training symbols and data symbols in order to maximize the channel capacity for transmission over uncertain channels with channel state information at the transmitter and receiver.
We considered the worst case capacity as a performance measure, where the transceiver maximizes the minimal capacity over all distributions of the measurement noise with a fixed covariance known at both the transmitter and receiver. We presented an exact expression of the channel capacity as a function of the channel covariance matrix, the noise covariance matrix, and the number of training symbols used during a coherence time interval. This expression determines the number of training symbols that need to be used by finding the optimal integer number of training symbols that maximize the channel capacity.
We also showed by means of numerical examples the trade-off between the number of training and data symbols. The results indicate that when the transmission power (or equivalently the signal to noise ratio) is high, a smaller number of training symbols is required to maximize the capacity compared to the low transmission power case.  We confirm these observations theoretically considering the asymptotic behavior of the power as it grows large or decreases to very small values.
Future work considers the general MIMO case, which is more involved due to combinatorial issues arising in choosing the number of training symbols for different transmitting antennas.

\bibliographystyle{plain}
\bibliography{../ref/mybib}

\appendix


\subsection*{Proof of Theorem \ref{thm1}}
Define 

\begin{equation}
\label{vcost}
	{\C}_\star = \sup_{y_d \sim \gd{0, X_d \otimes C + W_d}} \inf_{\substack{v_d\sim \gd{0, V_d}\\ V_d= I_{mT_d} + X_d \otimes \Ct}}  \textup{I}(x_d; y_d)
\end{equation}
and
\begin{equation}
\label{vcost}
	\overline{\C} =  	\sup_{\E{x_d(t) x_d^*(t)}= X_d}
						\inf_{\substack{v_d\sim \gd{0, V_d}\\ V_d= I_{mT_d} + 
							X_d \otimes \Ct}}  \textup{I}(x_d; y_d),
\end{equation}

%

Let the estimator at the receiver be a linear function of the received signal $y_d$, that is
the estimate of $x_d$ is given by $\hat{x}_d = L_d(\Hh_d x_d+v_d)$ for some matrix $L_d\in \R^{T_d\times mT_d}$. Since the receiver could be chosen nonlinear, we will get a lower bound on the capacity $\C(T_\tau)$.
Introduce $\tilde{x}_d = x_d - \hat{x}_d$. Then
\begin{align}
\C_\star 	&= \textup{I}(x_d; y_d) \\
			&=	h(x_d) - h(x_d | y_d)\\
			&=	h(x_d) - h(\hat{x}_d + \tilde{x}_d | y_d)\\  
			&\geq  h(x_d) - h(\tilde{x}_d) \label{condind}
\end{align}
Thus, 
\begin{align}
\inf_{\substack{v_d\\ V_d= I_{mT_d} + 
							X_d \otimes \Ct}}  \textup{I}(x_d; y_d)	
						&= h(x_d) - h(\tilde{x}_d)
\end{align}
which is attained for $v_d$ such that $y_d = \Hh_d x_d+v_d$ is Gaussian, which implies in turn that
$\hat{x}_d = L_d(\Hh_d x_d+v_d)$ and $\tilde{x}_d$ are Gaussian, and $\tilde{x}_d$ is independent of $y_d$, so equality holds in (\ref{condind}). Also, $h(x_d)$ is maximized for $x_d$ when it's Gaussian under a fixed covaraince. Hence,
$
\C(T_\tau)\geq \C_\star.
$
Now consider an arbitrary receiver, that is not necessarily linear and suppose that $v_d$ is Gaussian and independent of $x_d$. This gives the capacity upperbound $\overline{\C} \geq \C(T_\tau)$. We have that
\begin{align}
\textup{I}(x_d; y_d)	&=	h(y_d) - h(y_d | x_d)\\  
						&=	h(y_d) - h(v_d | x_d)\\
						&= h(y_d) - h(v_d) \label{condindv}
\end{align}
where the inequality (\ref{condindv}) holds since $v_d$ is assumed to be independent of $x_d$.
Since $h(y_d)$ is maximized when $y_d$ is Gaussian, we get
$
\C(T_\tau)\leq \overline{\C} = \C_\star.
$
Since we already have the inequality $\C(T_\tau)\geq \C_\star$, we conclude that $\C(T_\tau) = \C_\star$, and clearly $x_d$ and $y_d$ Gaussian give the worst case capacity $\C(T_\tau) = \C_\star$.

Now let the eigenvalue decompositions of $X_d$, $C$, and $\Ct$ be given by $X_d = U\Sigma U^*$, $\Sigma = \text{diag}(\sigma_1, ..., \sigma_{T_d})$, 
$C = \Ub\Sigmab \Ub^*$, $\Sigmab = \text{diag}(\Sigmab_1, ..., \Sigmab_{T_d})$, and
$\Ct = \Ut\tilde{\Sigma} \Ut^*$, $\Sigmat = \text{diag}(\sigmat_1, ..., \sigmat_{T_d})$ 
. Then,  the mutual information between the Gaussian input $x_d$ and Gaussian output $y_d$ satisfies
\begin{align}
\textup{I}(x_d; y_d) 	
			&= h(y_d) - h(y_d | x_d) \\
			&= \log_2{\det(X_d \otimes C  + I_m\otimes I_{T_d})}\nonumber \\
			& ~~~ - \log_2{\det(X_d \otimes \Ct + I_m\otimes I_{T_d})} \\
			&= \log_2{\det(X_d \otimes C +
				 I_{mT_d})} \nonumber \\
			& ~~~ - \log_2{\det(X_d \otimes \Ct + I_{mT_d})}\\
			&= \log_2{\det((U \otimes \Ub)(\Sigma \otimes \Sigmab)(U \otimes \Ub)^* 
				+ I_{mT_d})} \nonumber \\
			& ~~~ - \log_2{\det((U \otimes \Ut)(\Sigma \otimes \Sigmat)(U \otimes \Ut)^* 
				+ I_{mT_d})} \label{KP} \\
			&= \log_2{\det((U \otimes \Ub)^*(U \otimes \Ub)(\Sigma \otimes \Sigmab) 
				+ I_{mT_d})} \nonumber \\
			& ~~~- \log_2{\det((U \otimes \Ut)^*(U \otimes \Ut)(\Sigma \otimes \Sigmat) 
				+ I_{mT_d})} \label{det} \\
			&= \log_2{\det(\Sigma \otimes \Sigmab + I_{mT_d})} \nonumber\\
			&   ~~~ - \log_2{\det(\Sigma \otimes \Sigmat + I_{mT_d})} \label{UU}\\
			&= \log_2{\left(\prod_{i=1}^{T_d} \det(\sigma_i \Sigmab+ I_m ) \right)} \nonumber \\
			&  ~~~ - \log_2{\left(\prod_{i=1}^{T_d} \det(\sigma_i \Sigmat+ I_m ) \right)}\\
			&= T_d \log_2 \left(\prod_{i=1}^{T_d} \frac{\det(\sigma_i \Sigmab+ I_m )}{\det(\sigma_i \Sigmat+ I_m )} \right)^{\frac{1}{T_d}}\\
			&\leq T_d \log_2 \left(\frac{1}{T_d}\sum_{i=1}^{T_d} \frac{\det(\sigma_i \Sigmab+ I_m )}{\det(\sigma_i \Sigmat+ I_m )} \right) \label{AG}
\end{align}
where (\ref{KP}) follows from Proposition \ref{KP1}, (\ref{det}) follows from Proposition \ref{detab}, (\ref{UU}) follows from Proposition \ref{KP1}, and  (\ref{AG}) follows from Proposition \ref{amgm}, with equality if and only if $\sigma_1 = \sigma_2 = \cdots = \sigma_{T_d}$. Since $\sigma_1 + \sigma_2 + \cdots + \sigma_{T_d}  = \tr{\Sigma} = \tr{X} = nP$, we must have $$\sigma_1 = \sigma_2 = \cdots = \sigma_{T_d} = P$$ This implies that the capacity maximizing input covariance is $X=P\cdot I_{T_d}$. Thus, the maximum capacity is given by
\begin{align}
\C(T_{\tau}) 
&= \log_2{\det((P I_{T_d})\otimes (\Ch +  \Ct)  + I_m\otimes I_{T_d})} \nonumber \\
&~~~ - \log_2{\det((PI_{T_d})\otimes \Ct + I_m\otimes I_{T_d})}\\
&= T_d (\log_2{\det(PC + I_m)} \nonumber\\
&~~~- \log_2{\det(P\Ct + I_m)})\\
&= (T-T_\tau) (\log_2{\det(PC + I_m)}\nonumber \\
&~~~- \log_2{\det(P\Ct + I_m)})
\end{align}
and the proof is complete.

\subsection*{Supplement for Example \ref{ex2}}
The random matrix $C$ used in  Example \ref{ex2} is given by
\small{
\begin{equation*}
\begin{aligned}
C &=
\left(
\begin{matrix}
12.618&-2.5315&-2.2424&1.1965&-1.5896 \\ 
-2.5315&8.7639&2.0577&-4.3889&2.0117 \\ 
-2.2424&2.0577&6.0997&0.2384&-1.1894 \\ 
1.1965&-4.3889&0.2384&7.327&1.1523 \\ 
-1.5896&2.0117&-1.1894&1.1523&10.2643 \\ 
2.5086&-0.011&-1.6082&0.3583&1.3222 \\ 
-4.5906&-0.499&-4.5764&0.705&-0.1717 \\ 
-1.398&3.1713&-2.319&-5.3612&-2.361 \\ 
1.9345&2.1956&-0.6284&-2.2747&-0.6889 \\ 
-4.0798&0.2636&-0.5846&-0.7751&1.2117
\end{matrix} \right. \\
& ~~~~~~
\left.
\begin{matrix}
2.5086&-4.5906&-1.398&1.9345&-4.0798 \\ 
-0.011&-0.499&3.1713&2.1956&0.2636 \\ 
-1.6082&-4.5764&-2.319&-0.6284&-0.5846 \\ 
0.3583&0.705&-5.3612&-2.2747&-0.7751 \\ 
1.3222&-0.1717&-2.361&-0.6889&1.2117 \\ 
2.1366&0.2868&-0.8628&1.2528&-1.1311 \\ 
0.2868&18.4323&10.9609&2.3883&2.0394 \\ 
-0.8628&10.9609&20.4969&10.5705&1.3217 \\ 
1.2528&2.3883&10.5705&9.7425&-0.307 \\ 
-1.1311&2.0394&1.3217&-0.307&3.3511
\end{matrix} \right)
\end{aligned}
\end{equation*}
}

\end{document}